\newtheorem{theorem}{Theorem}
\newtheorem{definition}{Definition}
\newcommand{\bra}[1]{{\left\langle{#1}\right\vert}}
\newcommand{\ket}[1]{{\left\vert{#1}\right\rangle}}
\newcommand{\qw}[1][-1]{\ar @{-} [0,#1]}
\newcommand{\gate}[1]{*+<.6em>{#1} \POS ="i","i"+UR;"i"+UL **\dir{-};"i"+DL **\dir{-};"i"+DR **\dir{-};"i"+UR **\dir{-},"i" \qw}
\newcommand{\lstick}[1]{*!R!<.5em,0em>=<0em>{#1}}
\newcommand{\Qcircuit}{\xymatrix @*=<0em>}
\begin{document}

\sloppy

\title{A simple protocol for fault tolerant verification of quantum computation}
\author{Alexandru Gheorghiu}
\affiliation{School of Informatics, University of Edinburgh, 10 Crichton Street, Edinburgh EH8 9AB, UK}
\author{Matty J. Hoban}
\affiliation{Clarendon Laboratory, Department of Physics, University of Oxford, Parks Road, OX1 3PJ, UK}
\author{Elham Kashefi}
\affiliation{School of Informatics, University of Edinburgh, 10 Crichton Street, Edinburgh EH8 9AB, UK}
\affiliation{CNRS LIP6, Universit\'{e} Pierre et Marie Curie, Paris, France}

\begin{abstract}
With experimental quantum computing technologies now in their infancy, the search for efficient means of testing the correctness of these quantum computations is becoming more pressing. An approach to the verification of quantum computation within the framework of interactive proofs has been fruitful for addressing this problem. Specifically, an untrusted agent (prover) alleging to perform quantum computations can have his claims verified by another agent (verifier) who only has access to classical computation and a small quantum device for preparing or measuring single qubits. 
However, when this quantum device is prone to errors, verification becomes challenging and often existing protocols address this by adding extra assumptions, such as requiring the noise in the device to be uncorrelated with the noise on the prover's devices. In this paper, we present a simple protocol for verifying quantum computations, in the presence of noisy devices, with no extra assumptions. This protocol is based on \emph{post hoc} techniques for verification, which allow for the prover to know the desired quantum computation and its input. We also perform a simulation of the protocol, for a one-qubit computation, and find the error thresholds when using the qubit repetition code as well as the Steane code.
\end{abstract}

\maketitle

\section{Introduction}
There is now substantial evidence that quantum computers cannot be simulated efficiently by their classical counterparts. Shor's factoring algorithm is one example where an efficient (polynomial time) solution to a problem can be found with a quantum computer, but the best classical algorithm that we know runs in superpolynomial time (in the worst case) \cite{shor}. 
While the inability to be efficiently classically simulated can be of great use in computing, it does raise other problems. In particular, how can we check that the device is producing the correct answer if it is hard to simulate with a classical computer? For the case of factoring, we can just multiply the factors, but for other problems (such as simulating a quantum system) there is no \textit{a priori} classically efficient means of verifying whether a quantum computation was carried out \cite{vazirani}.

To be a bit more explicit, the scenario is that of a \textit{verifier} that can do probabilistic classical computation, and an untrusted \textit{prover} who is limited to universal quantum computation. The verifier wishes to use the prover's quantum computer, but might not be able to trust the prover's output, and so the verifier will perform some sort of verification. In full generality, the prover and the verifier can exchange multiple rounds of classical communication with each other (as the verifier has no quantum capabilities). Both parties' computations have a running time that is, at most, a polynomial in the size of the input to the computation (hence efficient by standard notions in computational complexity). The verifier would like to be able to perform any efficient quantum computation by delegating the task to the untrusted prover. Clearly, if the prover is honest and actually has a quantum computer, they can perform the task for the verifier. On the other hand, if the prover is dishonest then they could lie or deviate in any manner they deem fit. The goal of verification of quantum computation is to catch (with high probability) when a prover is being dishonest and reject his output.

There has been a lot of progress in the development of verification protocols (see \cite{review} for a survey paper). 
It is still an open problem whether verification is possible in this setting, with no extra computational assumptions.
Two main modifications of the classical client, single server setting have been considered: give the verifier some quantum device and allow some quantum communication between the prover and verifier, or introduce multiple non-communicating quantum provers that share entanglement. In the former approach -- the approach we take in this paper -- this quantum device could prepare particular quantum states or make measurements in a particular basis, and this quantum process becomes an integral part of the prover's quantum computation. In this approach, existing protocols typically assume an ideal setting in which the only ``errors'' that can occur are a result of malicious behaviour of the prover.
But realistically, quantum devices are highly susceptible to noise, and so a verifier could introduce noise into a computation implemented by an honest prover. Reducing this noise and controlling it is one of the great challenges in developing scalable quantum computers. The \emph{threshold theorem} shows that as long the error rate per quantum gate is below a constant threshold, it is always possible to perform a \emph{fault tolerant quantum computation} with only a polylogarithmic increase in overhead \cite{threshold}.

We would like to avoid the verifier's quantum device becoming too powerful, even being polylogarithmic in the size of the input to the computation would be too powerful \footnote{For polylogarithmic size (uniform) quantum circuits, there is no known general means of classically simulating them, since a brute force simulation would run in quasipolynomial time.}. Therefore, we want to restrict it to having a quantum register that is constant-size. Even if the errors of the verifier's device can be suppressed, it still needs to be proven that this is not detrimental for the verification of a quantum computation. To wit, we want that a malicious prover does not exploit these errors in order to successfully trick the verifier into accepting incorrect results. 

For protocols in which the verifier is fully classical, fault tolerance is not a concern since one can assume that the provers are performing their quantum operations on top of a quantum error correcting code. Since provers are assumed to have universal quantum computing power, we naturally have to assume that they are capable of fault tolerant quantum computation between themselves. We emphasise that discussions about fault tolerance only make sense in the setting in which the verifier possesses a quantum device.

Thus, we are faced with the following problem:

\textbf{Problem statement:} \emph{Can a verifier with a constant-size and imperfect quantum device verifiably delegate a quantum computation to a single prover?}

We show that this is indeed possible. Furthermore, it is possible even if the verifier's device is an imperfect \emph{single-qubit} measurement device. Our approach is based on that of \textit{post hoc verification} \cite{posthocpublished,posthoc1,posthoc2}, where a prover sends quantum systems to a verifier that should be the ground state of a Hamiltonian. This ground state encodes the desired quantum computation and can be used to ``read off'' the outcome of that computation. If the verifier can indeed certify that this is the ground state, then the computation is verified. In our protocol we encode the qubits of this ground state into a \textit{logical ground state} where each qubit of the original state is encoded into a larger number of physical qubits via a quantum error correction code. This logical state is then the ground state of a logical Hamiltonian described by the quantum computation. In the protocol, the physical qubits in this logical state are then measured one at a time, and appropriate classical corrections are made on the outcomes of these measurements in post-processing if errors are detected. An honest prover's probability of successful computation will be boosted by this error correction, but importantly we can still verify if the logical ground state was indeed prepared by the prover. 

Finally, we consider a simple example of this protocol in the honest prover scenario. That is, using the repetition code and the Steane code, we can simulate and characterise the protocol's behaviour under bit-flip errors and depolarizing noise.

\textbf{Paper outline} --- In Section \ref{sec2} we give some basic complexity theoretic notions to formalize what we mean by verifying efficient quantum computations. We also outline post hoc quantum verification, which is the basis for our approach.
Next, in Section \ref{sec3} we give our protocol for fault tolerant verification of quantum computation, and also prove its correctness; we also describe our simulation of the protocol, with various degrees of noise and outline the obtained results. We then conclude, in Section \ref{sec4}, with some discussions and open problems.

Let us first comment on approaches that have also addressed the aforementioned problem of fault tolerant verification.

\textbf{Related work} --- For protocols in which the verifier has a small quantum device, the question of fault tolerance has been addressed in \cite{gkw, kd, fh, abem}. In \cite{gkw, kd, fh} the authors proposed protocols in which a classical client possessing either a single qubit preparation or measurement device, susceptible to noise, could verifiably delegate quantum computations to a prover. All these protocols are \textit{computationally blind}, meaning that the delegated computation is kept secret from the prover. We will return to this issue in detail in Section \ref{sec4}. Moreover, blindness is \emph{required} for achieving verifiability. However, this requirement of blindness introduces new difficulties when considering fault tolerant computation. To circumvent these difficulties, extra (potentially unrealistic) assumptions were made about the noise, which rule out the possibility of the prover utilising the noise to deceive the verifier. A discussion of the general difficulty in realizing a verifiable, blind, fault tolerant protocol is provided in \cite{abem}.

\section{Preliminaries}\label{sec2}

\subsection{Complexity theory}
Complexity theory classifies computational problems as sets of ``yes/no'' decision problems that are solvable by a particular model of computation, under certain constraints. Decision problems are modeled as sets of binary strings, known as \emph{languages}. The input to the problem is a bit string and the output is yes or no, depending on whether the string belongs to the language or not.
The primary class that is of interest to us, is that of decision problems decidable efficiently by a quantum computer, which is denoted $\mathsf{BQP}$. By ``efficiently'' we always mean in a number of time steps that scales as some polynomial in the size of the input to the problem.
For completeness we reproduce the definition of this class.
\begin{definition}
A language $L \subseteq \{0, 1\}^*$ belongs to $\sf{BQP}$ iff there exist a polynomial $p$, and a uniform quantum circuit family $\{C_{n}\}_{n}$, such that for any $x \in \{0, 1\}^{n}$ the following is true:
\begin{itemize}
\item when $x \in L$, $C_{n}(x)$ accepts with probability at least $a$, and
\item when $x \not\in L$, $C_{n}(x)$ accepts with probability at most $b$,
\end{itemize}
where $a - b \geq 1/p(n)$ and $|C_{n}| \leq p(n)$. 
\end{definition}
If we replace quantum circuits with classical boolean circuits, having access to random bits, we obtain the class $\sf{BPP}$, of problems that can be decided efficiently on a classical computer. We will frequently refer to machines that can solve either $\sf{BPP}$ problems or $\sf{BQP}$ problems as $\sf{BPP}$ machines or $\sf{BQP}$ machines, respectively.

Another class of interest is $\sf{MA}$ which consists of decisions problems for which the ``yes'' instances can be checked by a $\sf{BPP}$ machine, when given access to a polynomial-sized bitstring known as a \emph{proof}. More formally, $\sf{MA}$ is defined as follows:
\begin{definition}
A language $L \subseteq \{0, 1\}^*$ belongs to $\sf{MA}$ iff there exist a polynomial $p$, and a $\sf{BPP}$ machine $\mathcal{V}$ (known as verifier), such that for any $x \in \{0, 1\}^{n}$ the following is true:
\begin{itemize}
\item when $x \in L$, there exists a string $w \in \{0, 1\}^{ \leq p(n)}$ such that $\mathcal{V}(x,w)$ accepts with probability at least $a$, and
\item when $x \not\in L$, for all strings $w \in \{0, 1\}^{ \leq p(n)}$, $\mathcal{V}(x,w)$  accepts with probability at most $b$,
\end{itemize}
where $a - b \geq 1/p(n)$. 
\end{definition}
Essentially, one can view problems in $\sf{MA}$ as those for which a computationally powerful prover can convince a $\mathsf{BPP}$ verifier that the answer is ``yes'', with high probability, by providing a proof string that the verifier can check.
There is a quantum analogue of this known as $\sf{QMA}$, in which the proof string is a quantum state. Specifically:
\begin{definition}
A language $L \subseteq \{0, 1\}^*$ belongs to $\sf{QMA}$ iff there exist a polynomial $p$, and a $\sf{BQP}$ machine $\mathcal{V}$ (known as verifier), such that for any $x \in \{0, 1\}^{n}$ the following is true:
\begin{itemize}
\item when $x \in L$, there exists a quantum state $\ket{\psi}$ having at most $p(n)$-many qubits, such that $\mathcal{V}(x,\ket{\psi})$ accepts with probability at least $a$, and
\item when $x \not\in L$, for all quantum states $\ket{\psi}$ having at most $p(n)$-many qubits, $\mathcal{V}(x,\ket{\psi})$  accepts with probability at most $b$,
\end{itemize}
where $a - b \geq 1/p(n)$. 
\end{definition}

Clearly, $\mathsf{BQP} \subseteq \mathsf{QMA}$, since the $\sf{BQP}$ verifier can simply ignore the proof state from the prover. It is believed that the containment is strict, since, in principle, the prover can produce proof states that cannot be generated by the poly-time quantum verifier. In fact, it was shown in \cite{qmasinglequbit} that the ``quantum overhead'' of the verifier can be reduced to simply performing single-qubit measurements, while maintaining the ability to correctly decide all problems in $\mathsf{QMA}$. This is achieved by instructing the prover to prepare special states that satisfy two properties:
\begin{itemize}
\item Any $\mathsf{BQP}$ computation can be performed through suitable \emph{single-qubit} measurements of these states.
\item Any such state of a given size can be tested through \emph{single-qubit} measurements.
\end{itemize}
If one adds a further condition, namely that these special states can be prepared by a $\mathsf{BQP}$ machine (essentially restricting the prover to $\mathsf{BQP}$), then one obtains a scheme for verifying an arbitrary $\mathsf{BQP}$ computation using only single-qubit measurements. This is precisely what Fitzsimons, Hajdu\v{s}ek and Morimae have done in their post hoc verification protocols \cite{posthoc1,posthoc2,posthocpublished}. One could ask whether this merely allows the verifier to check the ``yes'' instances of $\mathsf{BQP}$ problems, however $\mathsf{BQP}$ is closed under complement\footnote{A complexity class is closed under complement, if for all languages $L$ contained in that class, the complement of $L$, denoted $L^c$ and consisting of all strings not contained in $L$, is also contained in the class.} meaning that the ``no'' instances can also be verified.

\subsection{Post hoc verification}
As mentioned, the core idea of post hoc verification is to have a $\mathsf{BQP}$ prover (or provers) prepare a quantum proof state that the verifier can check using single-qubit measurements.
To explain how the protocol works, we first define a problem known as the \emph{k-local Hamiltonian problem}, which was introduced by Kitaev in \cite{kitaev}.
A $k$-local Hamiltonian, acting on a system of $n$ qubits, is a hermitian operator $H$ that can be expressed as $H = \sum_{i} H_i$, where each $H_i$ is a hermitian operator which acts non-trivially on at most $k$ qubits.
The $k$-local Hamiltonian problem, for which we have taken the definition from \cite{qpcp}, is then the following:
\begin{definition}[The $k$-local Hamiltonian (LH) problem] \ 
\label{def:LH}
  \begin{itemize}
    \item  \textbf{Input:} $H_1,\ldots,H_m$, a set of $m$ Hermitian matrices
      each acting on $k$ qubits  
      out of an $n$-qubit system and
      satisfying $\|H_i\|\le 1$. Each matrix entry is specified by
      $poly(n)$-many bits. Apart from the $H_i$ we are also given
      two real numbers, $a$ and $b$ (again, with polynomially many
      bits of precision) such that $a-b>1/poly(n)$.
            
    \item \textbf{Output:} Is the smallest eigenvalue 
      of $H=H_1+H_2+...+H_m$ smaller than $b$ or are all its
      eigenvalues larger than $a$?
  \end{itemize} 
\end{definition}
Kitaev showed that this problem is complete for the class $\mathsf{QMA}$. In other words, the problem is in $\mathsf{QMA}$ and any problem in $\mathsf{QMA}$ can be reduced to it, in (classical) polynomial time \cite{kitaev}.
The idea is essentially this: for some language $L \in \sf{QMA}$, and given $a$ and $b$, one can construct a $k$-local Hamiltonian such that, whenever $x \in L$, its smallest eigenvalue is less than $b$ and whenever $x \not\in L$, all of its eigenvalues are greater than $a$. 
The proof state, $\ket{\psi}$, when $x \in L$, is the eigenstate of $H$ corresponding to its lowest eigenvalue (or a state that is close, in trace distance, to this state), known as the ground state. The verifier receives the state from the prover and measures one of the local terms $H_i$ (which is an observable) on that state. One can prove that this can be done with a polynomial-sized quantum circuit. This yields an estimate for measuring $H$ itself. Therefore, when $x \in L$ and the prover sends $\ket{\psi}$, with high probability the verifier will obtain the corresponding eigenvalue of $\ket{\psi}$ which will be smaller than $b$.

If the prover is malicious then it would have to convince the verifier to accept when $x \not\in L$. However, when this is the case, all the eigenvalues of $H$ will be larger than $a$ and so, no matter what state the prover sends, when the verifier measures the local term $H_i$ it will, with high probability, obtain a value greater than $b$ and will therefore reject.
We will refer to $a - b$ as the \emph{promise gap} of the local Hamiltonian.

The constant $k$ in the definition of the $k$-local Hamiltonian problem is not arbitrary. In the initial construction of Kitaev, $k$ had to be larger than $5$ for the problem to be $\sf{QMA}$-complete. Subsequent work has shown that even with $k=2$ the problem remains $\sf{QMA}$-complete \cite{2local}.
In fact, $H$ can be a particular type of $2$-local Hamiltonian known as an $XZ$-Hamiltonian and this is the type of Hamiltonian used in the post hoc protocol of Morimae and Fitzsimons \cite{posthoc2, posthocpublished}.

To define an $XZ$-Hamiltonian, we introduce some helpful notation. Consider an $n$-qubit operator $S$, which we shall refer to as $XZ$-term, such that:
\begin{equation}
S = \bigotimes_{j=1}^{n} P_j
\end{equation}
with $P_j \in \{I, X, Z\}$, where $X$ and $Z$ are the Pauli $X$ and $Z$ operators and $I$ is the identity. Denote $w_X(S)$ as the $X$-weight of $S$, representing the total number of $j$'s for which $P_j = X$. Similarly denote $w_Z(S)$ as the $Z$-weight for $S$.
An $XZ$-Hamiltonian is then a $2$-local Hamiltonian of the form:
\begin{equation}
H = \sum_i a_i S_i
\end{equation}
where the $a_i$'s are real numbers and the $S_i$'s are $XZ$-terms having $w_X(S_i) \leq 1$ and $w_Z(S_i) \leq 1$. 
Essentially, as the name suggests, an $XZ$-Hamiltonian is one in which each local term has at most one $X$ operator and one $Z$ operator.

We can now explain the post hoc protocol of \cite{posthoc2}. The protocol relies on the observation that $\sf{BQP} \subseteq \sf{QMA}$. This means that any problem in $\sf{BQP}$ can be viewed as an instance of the $2$-local Hamiltonian problem.
Therefore, for any language $L \in \sf{BQP}$ there exists an $XZ$-Hamiltonian, $H$, and a polynomial-time quantum verifier which can measure a local term of $H$ on the quantum witness sent by the prover and decide the problem correctly, with high probability. But since the local terms of an $XZ$-Hamiltonian have at most one $X$ operator and one $Z$ operator, the verifier will essentially have to measure at most two qubits, one in the standard basis $(\ket{0}, \ket{1})$ and the other in the Hadamard basis $(\ket{+}, \ket{-})$. 

We now restrict attention to $L \in \mathsf{BQP}$.
As mentioned, when $x \in L$, the proof state that the prover should send to the verifier should be close to the ground state of the $XZ$-Hamiltonian.
When $L \in \mathsf{BQP}$, the Hamiltonian can be chosen so that the ground state is close to a particular type of state known as a \emph{Feynman-Kitaev clock state} (also known as \emph{history state}), which can be prepared by a $\mathsf{BQP}$ prover \cite{kitaev}. 
To describe this state, consider a quantum circuit $\mathcal{C} = U_{T} U_{T-1} ... U_1$, with classical input $\ket{x}$, where $T = poly(|x|)$, for testing whether $x \in L$. Denoting $U_0 = I$, the Feynman-Kitaev state associated to $\mathcal{C}$ and $\ket{x}$ is the following:
\begin{equation}
\ket{\psi} = \frac{1}{\sqrt{T + 1}} \sum\limits_{t=0}^{T} U_{t} U_{t-1} ... U_0 \ket{x} \ket{1^{t} 0^{T - t}}
\end{equation}
There exists an $XZ$-Hamiltonian, $H$, such that when $x \in L$, we have that $\bra{\psi} H \ket{\psi} < a$, and when $x \not\in L$ we have that for any $\ket{\phi}$, $\bra{\phi} H \ket{\phi} > b$, for some $a, b$ such that $a - b > 1/poly(|x|)$.
The exact form of $H$ is not important for understanding the protocol. What is important is that for any $L \in \mathsf{BQP}$, the verifier can efficiently compute the description of $H$.

The post hoc protocol then works as follows:
\begin{enumerate}
\item The verifier computes the terms $a_i$ of the $XZ$-Hamiltonian, $H = \sum_i a_i S_i $, corresponding to $L$ and input $x$. They then send the description of $H$ and $x$ to the prover.
\item The prover responds by preparing the ground state of $H$ (the Feynman-Kitaev state, described above), denoted $\ket{\psi}$, and sends it to the verifier. This constitutes the quantum proof state for the statement that $x \in L$ (if $x \not\in L$, the same procedure is performed for the complement of $L$, denoted $L^c$, which is also in $\mathsf{BQP}$).
\item The verifier chooses one of the $XZ$-terms $S_i$, according to the normalized probability distribution $\{|a_i|\}_i$, and measures it on $\ket{\psi}$. They accept on outcome $-sgn(a_i)$ of the measurement.
\end{enumerate}
The protocol is correct, in that when the prover aims to convince the verifier that $x \in L$ (or $x \in L^c$, respectively) and sends the correct state (for the Hamiltonian corresponding to $L$ or to $L^c$, respectively), the verifier will accept with probability:
\begin{equation}
p_{acc} \geq \frac{1}{2} \left( 1  - \frac{b}{\sum_i|a_i|} \right)
\end{equation}
Additionally, the protocol is sound in that when the prover aims to convince the verifier that $x \not\in L$ (or $x \not\in L^c$, respectively), irrespective of the state that the prover sends, the verifier will accept with probability:
\begin{equation}
p_{acc} \leq \frac{1}{2} \left( 1  - \frac{a}{\sum_i|a_i|} \right)
\end{equation}
Since $\sum_i |a_i|$ is a constant and $a -b > 1/poly(|x|)$, the gap between the two probabilities is inverse polynomial in the size of the input.

\section{Fault tolerant verification of quantum computation}\label{sec3}
The protocol described above works assuming an ideal setting in which the quantum devices of both the prover and the verifier are perfect. Of course, this is an unrealistic assumption since any implementation of the protocol will be subject to noise stemming from an imperfect isolation of the quantum systems from the environment, and the use of faulty devices. 
It is straightforward to show that a constant rate of noise on these devices will lead to the failure of the protocol for sufficiently large computations.
This is because the gap between acceptance and rejection, defined by $a - b$, is inverse polynomial in the size of the input. As a result of noisy devices, the acceptance threshold is shifted to $a - c$, and the rejection threshold is shifted to $b + c$, where $c$ is some positive constant that depends on the noise rate of the devices. We can see that as long as $c < (a - b) / 2$, the verifier can still distinguish reliably between acceptance and rejection. However, it is clear that for a sufficiently long input, we will have that $c \geq (a - b) / 2$. At this point, the protocol no longer satisfies the correctness nor the soundness criteria.
In fact, this is common to all other verification protocols in the single-prover setting \cite{review}.
To address this issue we now give a fault tolerant version of the post hoc protocol that works in the presence of quantum devices subject to local noise having a constant error-rate.

\subsection{The fault tolerant protocol}
Our construction is simple: we ask the prover to encode the history state in a CSS (Calderbank-Shor-Steane) error-correcting code \cite{nc} and send it to the verifier. The verifier will then perform a \emph{transversal} measurement of the $X$ and $Z$ operators.
Transversality results in the logical operators being expressed as tensor products of physical $X$ and $Z$ operators, i.e.:
\begin{equation}
\tilde{X} = \bigotimes_{i=1}^m X_i \quad \quad \tilde{Z} = \bigotimes_{i=1}^m Z_i 
\end{equation}
where $\tilde{X}$ and $\tilde{Z}$ are the logical (or encoded) $X$ and $Z$ operators.
In effect, the original Hamiltonian is replaced with an encoded Hamiltonian by substituting each $XZ$-term with its corresponding logical form.

CSS codes are transversal and this ensures that the verifier needs to perform only single-qubit measurements. We also require an additional property, that is possessed by CSS codes, namely that the outcomes for the transversal measurements (of the $X$ and $Z$ operators) are encoded in a classical error-correcting code. This is because the verifier will not perform any quantum correction on the state sent by the prover. Instead, this state will be measured and the measurement outcomes are classically post-processed.

To clarify, consider the following simple example. Assume that the CSS code is a repetition code in which $\tilde{\ket{0}} = \ket{0}^{\otimes m}$ and $\tilde{\ket{1}} = \ket{1}^{\otimes m}$, for some odd $m > 1$. This code can correct $\lfloor \frac{m}{2} \rfloor$ bit-flip errors. If the verifier wishes to measure the $\tilde{Z}$ observable on an encoded state, they will instead measure $\bigotimes_{i=1}^m Z_i$. The $m$-bit outcome corresponds to the outcome of $\tilde{Z}$ encoded in a classical repetition code. Thus, the verifier will simply take the majority bit as the outcome of $\tilde{Z}$.

For our protocol, the verifier will measure a local term of the encoded Hamiltonian, in a transversal way, and perform the classical post-processing of the results in order to extract the corrected measurement outcome. With this corrected outcome, the acceptance condition is the same as in the ``unencoded'' case (i.e. if the outcome for the measurement of term $\tilde{S}_i$ is $-sgn(a_i)$).

To guarantee that this construction works, we show the following:
\begin{itemize}
\item[\textbf{(1)}] The encoded Hamiltonian preserves the $a - b$ promise gap of the original Hamiltonian. This is equivalent to showing that the encoded ground state of the original Hamiltonian is a ground state of the encoded Hamiltonian having the same energy.
\item[\textbf{(2)}] A polylogarithmic number of concatenations of the CSS code is sufficient to maintain an inverse polynomial acceptance-rejection gap in the presence of noise.
\end{itemize}
Having these properties guarantees that the fault tolerant post hoc protocol is both correct and sound, even in the presence of noisy devices. In other words:
\begin{theorem} \label{thm:main}
The post hoc protocol of Morimae and Fitzsimons can be made fault tolerant by encoding the $XZ$-Hamiltonian of the protocol in a CSS code and having the verifier perform the $X$ and $Z$ measurements in a transversal fashion.
\end{theorem}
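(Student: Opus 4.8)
The plan is to establish the two properties \textbf{(1)} and \textbf{(2)} isolated above and then combine them into correctness and soundness bounds that retain an inverse-polynomial acceptance--rejection gap under a constant local error rate.

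For \textbf{(1)} I would show that the encoded Hamiltonian $\tilde H=\sum_i a_i\tilde S_i$ is unitarily equivalent to $H\otimes I$ on the physical Hilbert space, so that preservation of the promise gap is immediate. On a single encoded block the transversal logical operators $\tilde X=\bigotimes_{j=1}^m X_j$ and $\tilde Z=\bigotimes_{j=1}^m Z_j$ are Hermitian involutions obeying the same commutation relation as $X$ and $Z$ (they anticommute for the repetition code with $m$ odd and for the Steane code), so together with the identity they span a copy of the algebra of $2\times2$ complex matrices; since that algebra has a unique irreducible representation, $(\mathbb{C}^2)^{\otimes m}\cong\mathbb{C}^2\otimes\mathbb{C}^{2^{m-1}}$ with $\tilde X=X\otimes I$, $\tilde Z=Z\otimes I$, and the code space is $\mathbb{C}^2\otimes\ket{v_0}$ for a fixed auxiliary vector $\ket{v_0}$. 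Taking the tensor product over the $n$ logical qubits gives $(\mathbb{C}^2)^{\otimes mn}\cong(\mathbb{C}^2)^{\otimes n}\otimes\mathcal{R}$ with every $\tilde S_i=S_i\otimes I_{\mathcal{R}}$, hence $\tilde H=H\otimes I_{\mathcal{R}}$. Thus $\tilde H$ has the spectrum of $H$ (with multiplicities rescaled), the gap $a-b$ is preserved, and the encoded history state, which equals $\ket{\psi}\otimes\ket{v_0}^{\otimes n}$ in this decomposition, is a ground state of $\tilde H$ of energy $\lambda_{\min}(H)$; equivalently, on the code space each $\tilde S_i$ acts as the bare $S_i$ on the logical register, and no state has energy below $\lambda_{\min}(H)$.

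For \textbf{(2)} I would invoke the standard concatenation/threshold recursion. The verifier's only quantum action is to measure, one physical qubit at a time with its constant-size device, the one or two blocks touched by the chosen $XZ$-term, and then decode classically; concatenating the CSS code $\ell$ times, the decoded logical bit of a block is wrong with probability at most $q_\ell\le p_{\mathrm{th}}\,(p/p_{\mathrm{th}})^{(t+1)^\ell}$ whenever the local error rate $p$ lies below the code threshold $p_{\mathrm{th}}$, with $t$ the number of errors corrected at one level. A term touches at most two blocks, so its decoded outcome is corrupted with probability $O(q_\ell)$, which perturbs the protocol's acceptance probability by at most $\varepsilon=O(q_\ell)$. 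Since $a-b\ge 1/\mathrm{poly}(n)$ and $\sum_i|a_i|$ is constant, it suffices to force $\varepsilon<\tfrac{1}{4}(a-b)/\sum_i|a_i|$; by the doubly-exponential suppression this already holds for $\ell=O(\log\log n)$ concatenation levels, i.e., a polylogarithmic number of physical qubits per logical qubit, so the full encoded proof state still has $\mathrm{poly}(n)$ qubits and the verifier still needs only a single-qubit measurement device.

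To assemble the theorem, observe that on the honest encoded history state the only syndromes are those created by the noise, so up to the $\varepsilon$ perturbation of \textbf{(2)} the verifier effectively measures the bare terms $S_i$ on $\ket{\psi}$ and accepts with probability $\ge\tfrac{1}{2}(1-b/\sum_i|a_i|)-\varepsilon$; for soundness, the key point is that ``transversally measure a term and then classically decode it'' has exactly the statistics of ``measure the CSS syndromes of the touched blocks, apply the syndrome-dependent Pauli recovery, then measure the bare transversal logical operator,'' and the recovery maps those blocks into the code space even when it mis-corrects. Because the recoveries on distinct blocks act locally and commute, a single $n$-qubit effective logical state $\tau$ (obtained from whatever state the prover submits by applying all block recoveries) simultaneously reproduces every decoded term expectation as $\mathrm{Tr}(S_i\tau)$, so the acceptance probability is at most $\tfrac{1}{2}(1-\mathrm{Tr}(H\tau)/\sum_i|a_i|)+\varepsilon\le\tfrac{1}{2}(1-a/\sum_i|a_i|)+\varepsilon$ when $x\notin L$; with the $\ell$ from \textbf{(2)} the honest and dishonest bounds are separated by $\Omega(1/\mathrm{poly}(n))$. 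The step I expect to be the main obstacle is exactly this soundness reduction: a cheating prover may submit a state far outside the code space and entangled across many blocks, and one must argue carefully that the nonlinear classical decoder grants it no new power --- the resolution being that classical decoding is operationally nothing but a syndrome measurement followed by a (possibly wrong) Pauli correction, which always returns the state to the code space where the transversal logical operators faithfully implement $H$.
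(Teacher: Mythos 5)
Your proposal follows the same overall strategy as the paper --- establish properties \textbf{(1)} and \textbf{(2)} and combine them into shifted completeness and soundness bounds --- and your arguments are correct. For property \textbf{(1)} the paper simply posits the encoding unitary $E$ with $E(X\otimes I^{\otimes m-1})E^{\dagger}=\tilde X$ (and likewise for $Z$) and concludes that $\tilde H$ is unitarily equivalent to $H\otimes I$; your derivation of the same decomposition from the uniqueness of the irreducible representation of the algebra generated by the anticommuting involutions $\tilde X,\tilde Z$ is equivalent in content, just more self-contained. Your property \textbf{(2)} is the same concatenation/threshold recursion as the paper's, with the same conclusion ($O(\log\log n)$ levels, polylogarithmic qubit overhead, error $\eta$ below an inverse polynomial so that $\tilde p_{acc}\geq p_{acc}-\eta$ and $\tilde p_{rej}\leq p_{rej}+\eta$ remain separated). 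Where you genuinely go beyond the printed proof is the soundness step: the paper only tracks how noise shifts the acceptance and rejection probabilities and does not explicitly confront the fact that the verifier's classical post-processing (majority vote / syndrome decoding) implements a nonlinear decoded observable that coincides with $\tilde S_i$ only on the code space, so that a cheating prover could submit a state far outside that space. Your reduction --- transversal measurement followed by classical decoding has the same statistics as syndrome measurement, Pauli recovery, and measurement of the bare logical operator, which yields a single effective logical state $\tau$ with $\mathrm{Tr}(H\tau)>a$ whenever $x\notin L$ --- is exactly the right way to close that implicit gap, and is arguably a more complete treatment of soundness than the one the paper gives.
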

\begin{proof}
Let $\tilde{X}$ and $\tilde{Z}$ be the logical $X$ and $Z$ operators in the chosen CSS code. We have that $\{ \tilde{X}, \tilde{Z} \} = 0$ and we will assume that these operators act on $m > 0$ qubits. Since these are operators for an error correcting code, there exists an encoding unitary, denoted $E$, such that:
\begin{equation} \label{eqn:enc1}
E (X \otimes I^{\otimes m - 1}) E^{\dagger} = \tilde{X}
\end{equation}
\begin{equation} \label{eqn:enc2}
E (Z \otimes I^{\otimes m - 1}) E^{\dagger} = \tilde{Z}
\end{equation}
Now let $H = \sum_i a_i S_i$ be an $XZ$-Hamiltonian acting on $n > 0$ qubits, and let $H' = H \otimes I^{n(m-1)}$. Clearly, $H$ and $H'$ have the same eigenvalues. But note that using Equations~\ref{eqn:enc1} and~\ref{eqn:enc2} we have that:
\begin{equation}
E^{\otimes n} S_i \otimes I^{n(m-1)} E^{\otimes n} = \tilde{S_i}
\end{equation}
where $\tilde{S_i}$ is obtained by replacing $X$, $Z$ and $I$ by $\tilde{X}$, $\tilde{Z}$ and $I^{\otimes m}$, respectively. This then implies that:
\begin{equation}
E^{\otimes n} H' E^{\otimes n} = \tilde{H}
\end{equation}
where $\tilde{H} = \sum_i a_i \tilde{S_i}$ is the encoded $XZ$-Hamiltonian. Thus, since $\tilde{H}$ and $H'$ are unitarily related, they will also have the same eigenvalues. Moreover, if $\ket{\tilde{\psi}} = E^{\otimes n} \ket{\psi} \ket{anc}$ is the encoded version of some $n$-qubit state $\ket{\psi}$, for a suitably chosen ancilla state $\ket{anc}$, it is clear that for any such $\ket{\psi}$ we have that:
\begin{equation}
\bra{\psi} H \ket{\psi} = \bra{\tilde{\psi}} \tilde{H} \ket{\tilde{\psi}}
\end{equation}
Therefore, if $\ket{\psi}$ is a ground state of $H$, $\ket{\tilde{\psi}}$ will be a ground state of $\tilde{H}$.

This proves property \textbf{(1)}, since it shows that the encoded Hamiltonian will have the same promise gap as the original Hamiltonian.

To prove property \textbf{(2)}, we first need to describe what we mean by noisy measurements. The verifier makes $X$ and $Z$ measurements, but with probability $\epsilon_m$ there is an error in the measurement. The probability of error is independent between uses of the measurement devices, i.e. there are no correlated errors\footnote{Given that we use CSS codes, our construction can tolerate correlated errors as well, provided that these are correctable by the chosen CSS code.}. To be a bit more precise, for ideal measurement operator $M_{x}$ for outcome $x$, we apply a unital map $\mathcal{E}$ to $M_{x}$, where with probability $1-\epsilon_m$, $M_{x}$ is unchanged, and with probability $\epsilon_m$, $M_{x}$ is changed to something else. Alternatively, if we measure an $n$-qubit state $\rho$ one qubit at a time, the noisy measurement is equivalent to transforming $\rho$ to $(\mathcal{E}^{\dagger})^{\otimes n}(\rho)$, and then making an ideal measurement on each qubit individually, where $\mathcal{E}^{\dagger}$ is the channel that is dual to $\mathcal{E}$.

This error model of the measurement device is exactly how errors are traditionally modelled in quantum computation, where they are identically and independently distributed on the qubits. So if each qubit in the Hamiltonian is encoded in a block of qubits, then due to the error-correcting code, the probability of obtaining an incorrect outcome (after classical post-processing) has been suppressed from $\epsilon_m$ on the original qubit to at most $\alpha\epsilon_m^{2}$ on the whole block, for some constant $\alpha$ (determined by the code). 
Here we have implicitly used the fact that the measurement outcome for the logical qubit in one block is obtained through classical error correction (post-processing) of the outcomes of measuring the block qubits.
Concatenating $k$ times then results in probability $\alpha^{(2^{k}-1)}\epsilon_{m}^{2^{k}}$ of there being an error upon measuring an encoded qubit.

The verifier will make two logical qubit measurements, so to achieve a final error rate $\eta$, we must have the error for each logical qubit after $k$ concatenations be $\alpha^{(2^{k}-1)}\epsilon_{m}^{2^{k}}\leq\frac{\eta}{2}$. Provided that $\alpha$ is below the threshold probability $p_{th}=\alpha^{-1}$ of the code, then if each block consists of $b$ qubits with $k$ levels of concatenation, for each qubit we have 
\begin{equation}
b^{k}=\left(\frac{\textrm{log}(2/\alpha\eta)}{\textrm{log}(1/\alpha\epsilon_{m})}\right)^{\textrm{log}b},
\end{equation}
which is $O(\textrm{polylog}(\frac{2}{\eta}))$. So if the total number of qubits in the ground state of the original Hamiltonian is $n$, after $k$ levels of encoding in blocks of size $b$, the total number of qubits in the encoded ground state is $O(n \; \textrm{polylog}(\frac{2}{\eta}))$.

If the probability of acceptance (rejecting) in the original protocol (without noisy measurements) is $p_{acc}$ ($p_{rej}$) and we have that $p_{acc}-p_{rej}\leq\frac{1}{\textrm{poly(n)}}$. Now with noisy measurements, we have that the new probability of acceptance (with error correction) is $\tilde{p}_{acc}\geq p_{acc}-\eta$ and $\tilde{p}_{rej}\leq p_{rej}+\eta$. Therefore, to maintain a polynomial gap between acceptance and rejection we must have that $\eta$ is sufficiently smaller than an inverse polynomial, which only incurs a polylogarithmic overhead. Note that only a polynomial overhead is required if we wish for $\eta$ to be exponentially small.
\end{proof}

The idea of encoding the proof state in an error-correcting code while maintaining a single-qubit measurement device for the verifier has also been considered, in the context of general $\sf{QMA}$ problems, in \cite{qmaft}. In that case, however, the proof state is a graph state that is used by the verifier to perform a fault tolerant measurement-based quantum computation. The verifier is also required to test that this state corresponds to the correct graph state and this is achieved through a stabilizer test.

In our case, by restricting to $\sf{BQP}$ computations, we simply require the verifier to measure the history state associated to the quantum computation. By showing that the encoded Hamiltonian has the same promise gap as the original Hamiltonian it is therefore sufficient to request that the prover encode the history state in a CSS code.

\subsection{Example}
Let us consider a toy example of our protocol in the case of an honest prover, for which we will give numerical results when using the repetition code and the Steane code, respectively.
To start with, we should consider a quantum computation for which we want to construct a history state. Given that the Steane code will encode one logical qubit as $7$ physical qubits, this computation needs to be small enough so that we are able to perform multiple runs of the protocol, in a reasonable amount of time.
For this reason, we will choose the following one-qubit computation:
\begin{figure}[htbp!]
\[
\Qcircuit @C=2em @R=1.4em {
   & \lstick{\ket{x}} & \gate{X} & \gate{D(\pi/8)} & \qw \\
}
\]
\caption{Example computation.}
\label{fig:circuit}
\end{figure}
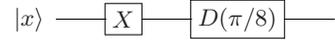

where:
\begin{equation}
D(\phi) = cos(\phi) Z + sin(\phi) X
\end{equation}
Note that $D(\phi)$ is universal for single-qubit quantum computations\footnote{Additionally, $\{ CNOT, D(\phi)\}$ is universal for general quantum computations.}.
The computation has two time steps, hence $T=2$. Consider the case $x=0$. The input state starts out as $\ket{0}$, it is then flipped to $\ket{1}$ and upon application of the $D(\pi/8)$ gate it becomes $sin(\pi/8) \ket{0} - cos(\pi/8) \ket{1}$.
If we designate output $\ket{1}$ as acceptance, then this circuit will accept $x = 0$ with probability $cos(\pi/8)^2$.
The history state, for $x=0$, will be:
\begin{multline*}
\ket{\psi_{x=0}} = \frac{1}{\sqrt{3}} ( \ket{0}\ket{00} + \ket{1}\ket{10} + \\
                 (sin(\pi/8) \ket{0} - cos(\pi/8) \ket{1}) \ket{11} ) \\
\end{multline*}
where we have separated the computation register from the clock register.
For the $x=1$ case, the history state will be:
\begin{multline*}
\ket{\psi_{x=1}} = \frac{1}{\sqrt{3}} ( \ket{1}\ket{00} + \ket{0}\ket{10} + \\
                 (cos(\pi/8) \ket{0} + sin(\pi/8) \ket{1}) \ket{11} ) \\
\end{multline*}

We now need to consider an $XZ$-Hamiltonian such that the ground state is close to $\ket{\psi_{x=0}}$. Since the $2$-local construction is fairly involved and we are only interested in a simple example, we will instead consider a $3$-local Hamiltonian. This, of course, does not change the protocol in any way and the verifier will still perform single-qubit $X$ and $Z$ measurements. Following the works of \cite{kemperegev, 2local}, the Hamiltonian will have the following form:
\begin{equation*}
H = H_{in} + H_{clock} + H_{prop} + H_{out}
\end{equation*}
where:
\begin{itemize}
\item $H_{in}$ penalizes terms in which the input is not of the correct form, at the start of the computation ($T=0$).
\item $H_{clock}$ penalizes terms in which the clock register is not of the correct form, throughout the computation.
\item $H_{prop}$ penalizes terms that do not correspond to the chosen computation.
\item $H_{out}$ penalizes terms for which the output of the computation register is not $\ket{1}$ (i.e. non-accepting computations).
\end{itemize}
In our case, we have:
\begin{equation*}
H_{in} = (I - \ket{x}\bra{x}) \otimes \ket{0}\bra{0} \otimes I
\end{equation*}
\begin{equation*}
H_{clock} = I \otimes \ket{01}\bra{01}
\end{equation*}
\begin{equation*}
H_{prop} = H_{prop_1} + H_{prop_2}
\end{equation*}
where:
\begin{equation*}
H_{prop_1} = \frac{1}{2}( I \otimes \ket{0}\bra{0} \otimes I - X \otimes X \otimes I + I \otimes \ket{10}\bra{10})
\end{equation*}
\begin{equation*}
H_{prop_2} = \frac{1}{2}( I \otimes I \otimes \ket{1}\bra{1} - D(\pi/8) \otimes I \otimes X + I \otimes \ket{10}\bra{10})
\end{equation*}
and finally:
\begin{equation*}
H_{out} = \ket{0}\bra{0} \otimes I \otimes \ket{1}\bra{1}
\end{equation*}
It should be noted that $\ket{\psi_{x}}$ is the ground state of $H_{in} + H_{clock} + H_{prop}$, but not the ground state of $H$. It is the $H_{out}$ term that singles out $\ket{\psi_{x=0}}$ and makes the ground state of $H$ be close, in trace distance, to the history state for the $x=0$ case. This is because in that case, the output of the computation will be $\ket{1}$, with high probability.

We now write $H$ in $XZ$ form:
\begin{multline} \label{eqn:xzham}
H = \frac{7}{4} III + \frac{1}{4}  (1 - (-1)^x) ZII - \frac{1}{4} (-1)^x ZZI \\
-\frac{1}{4}IZZ - \frac{1}{2} XXI - \frac{1}{2} XXZ - \frac{1}{2} sin(\pi/8) XIX \\
+ \frac{1}{2} sin(\pi/8) XZX - \frac{1}{2} cos(\pi/8) ZIX \\
 + \frac{1}{2} cos(\pi/8) ZZX - \frac{1}{4} ZIZ
\end{multline}

The protocol proceeds as follows. The verifier will inform the prover that they wish to perform the computation from Figure~\ref{fig:circuit}, for input $x=0$. The prover reports that the computation accepts (with high probability) and prepares the history state $\ket{\psi_{x=0}}$, encoded in a CSS code. This state is sent qubit by qubit to the verifier. The verifier, will choose one of the terms from Equation~\ref{eqn:xzham}, with its corresponding probability, and perform the transversal measurement of the state. For instance, the term $XZX$ will be chosen with probability $\frac{1}{2K} sin(\pi/8)$, where $K = \sum_i |a_i| \approx 4.8$. The verifier measures the $X$ and $Z$ operators, performs classical post-processing on their results and combines them so as to recover the outcome of measuring $XZX$. She accepts on outcome $-1$ for this measurement, since $\frac{1}{2}sin(\pi/8)$ is positive.

For the $x=1$ case, the situation is similar. In this case, the prover will inform the verifier that the computation rejects (with high probability) and so the verifier will change the $H_{out}$ term of the Hamiltonian to:
\begin{equation}
H_{out} = \ket{1}\bra{1} \otimes I \otimes \ket{1}\bra{1}
\end{equation}
and otherwise proceed as in the $x=1$ case.

\subsection{Numerical results}
To simulate the above protocol, we considered two error-correcting codes: the repetition code and the Steane code. In both instances, we wanted to compare how the verifier's probability of acceptance changes as we increase the amount of noise applied to the history state. Before showing the results, we should first ask: what is the probability of acceptance, for $x=0$, when there is no noise in the system? One can show that:
\begin{equation}
p_{acc} = \frac{1}{2} \left( 1 - \frac{\bra{\psi_{x=0}} H \ket{\psi_{x=0}}}{\sum_i |a_i|} \right)
\end{equation}
and in our case $\bra{\psi_{x=0}} H \ket{\psi_{x=0}} \approx 0.0488$. We therefore find that $p_{acc} \approx 0.4949$.

The first case we considered is the repetition code, with $3$ physical qubits per logical state. This code can only correct for $X$ errors. We therefore considered the noise channel:
\begin{equation*}
\mathcal{F}(\rho) = (1 - p) \rho + p X \rho X
\end{equation*}
acting independently on each individual qubit.
The results are shown in Figure~\ref{fig:rep3}.

\begin{figure}[htbp!]
\centering
\includegraphics[scale=0.57]{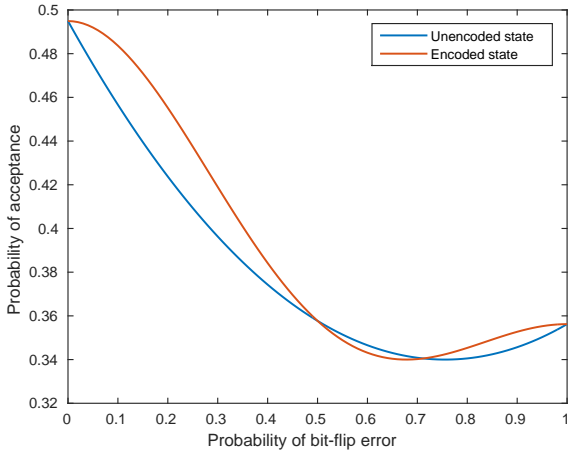}
\caption{Comparison between encoded and unencoded states for the $3$-qubit repetition code.}
\label{fig:rep3}
\end{figure}

As we can see, the point where the encoded state yields the same acceptance probability as the unencoded state is $p = 0.5$. The acceptance probabilities for the unencoded state were determined by applying the channel $\mathcal{F}$ to each qubit in $\ket{\psi_{x=0}}$, resulting in a state $\rho$, and then computing:
\begin{equation} \label{eqn:pacc}
p_{acc} = \frac{1}{2} - \frac{Tr(H \rho)}{2\sum_i |a_i|}
\end{equation}
The same is true for the encoded state, except that logical $Z$ operators are replaced with:
\begin{equation*}
Z_M = M_0 - M_1
\end{equation*}
where:
\begin{equation*}
M_0 = \ket{000}\bra{000} + \ket{001}\bra{001}
 + \ket{010}\bra{010} + \ket{100}\bra{100}
\end{equation*}
\begin{equation*}
M_1 = \ket{111}\bra{111} + \ket{110}\bra{110}
+ \ket{101}\bra{101} + \ket{011}\bra{011}
\end{equation*}
Essentially, the $+1$ eigenspace of $Z_M$ is spanned by states containing a majority of $\ket{0}$ and the $-1$ eigenspace is spanned by states containing a majority of $\ket{1}$. Measuring $Z_M$ is the same as performing a transversal $Z$ measurement and taking the majority outcome.

If we increase the size of the encoded state to $5$ qubits, we obtain the results from Figure~\ref{fig:rep5}.
\begin{figure}[htbp!]
\centering
\includegraphics[scale=0.57]{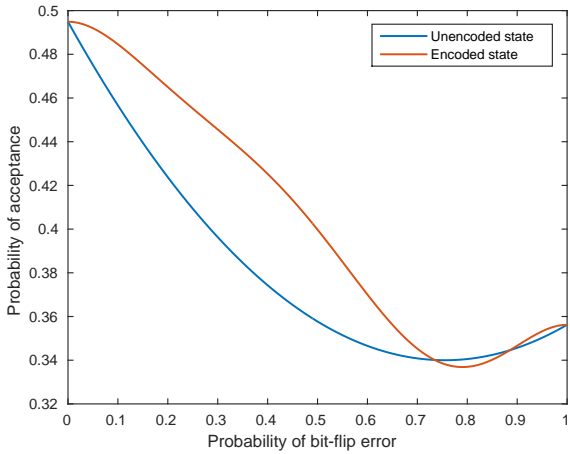}
\caption{Comparison between encoded and unencoded states for the $5$-qubit repetition code.}
\label{fig:rep5}
\end{figure}
As expected, the noise threshold increases and is around $p \approx 0.72$.

We now consider the Steane code, which can detect and correct for arbitrary errors on a single qubit, while encoding one logical state in $7$ physical qubits. This means that the encoded state will comprise of $21$ qubits. 
For this case, we will assume that each qubit is subject to depolarizing noise, characterised by the channel:
\begin{equation*}
\mathcal{D}(\rho) = (1 - 3p/4) \rho + p/4 (X \rho X + Y \rho Y + Z \rho Z)
\end{equation*}

\begin{figure}[htbp!]
\centering
\includegraphics[scale=0.57]{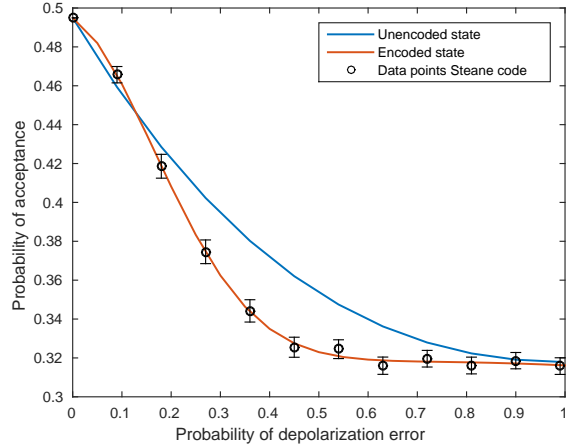}
\caption{Comparison between encoded and unencoded states for Steane's code.}
\label{fig:steane1}
\end{figure}

Due to the large number of entries for the density matrix of the encoded state, we were unable to directly apply the channel $\mathcal{D}$. Instead, for each qubit in $\ket{\tilde{\psi}_{x=0}}$, we chose to either leave it unchanged, with probability $(1 - 3p/4)$ or, with probability $p/4$, apply either $X$, $Y$ or $Z$.
This process is repeated multiple times, and in each case the probability of acceptance is computed using Equation~\ref{eqn:pacc}. The overall probability of acceptance is then estimated by taking the average over all of these runs.
The results are shown in Figure~\ref{fig:steane1}.

\begin{figure}[htbp!]
\centering
\includegraphics[scale=0.57]{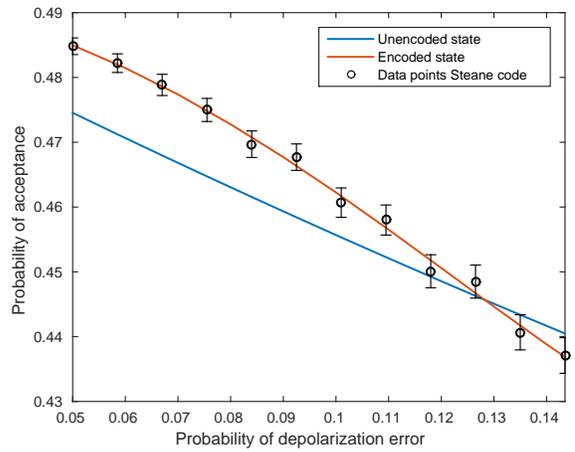}
\caption{Threshold for the Steane code.}
\label{fig:steane2}
\end{figure}

We considered $12$ data points, spread equally in the interval $[0, 1]$, and for each we performed $1000$ repetitions of applying noise in order to estimate $p_{acc}$.
The error bars represent confidence intervals for the computed values, assuming a confidence of $95\%$. Additionally, the orange curve represents the best fit interpolation of the given samples, when assuming a Gaussian model. As we can see, the threshold point appears to be between $0.1$ and $0.2$. 
By considering $12$ samples in the range between $0.05$ and $0.15$, and $4000$ repetitions per sample, in Figure~\ref{fig:steane2}, we find that the threshold point is between $0.12$ and $0.13$.

The simulations were performed in MATLAB, on the Eddie Mark $3$ cluster of The University of Edinburgh. The code for our simulations is available on Github \cite{github}.

\section{Conclusions}\label{sec4}
We have given a simple construction for a fault tolerant quantum verification protocol. In a nutshell, the construction involves taking the original post hoc verification protocol of Morimae and Fitzsimons and encoding it in a CSS error-correcting code. Since the original protocol was not blind, neither is its fault tolerant counterpart. A protocol being blind means that the delegated computation is kept secret from the prover, and they only learn at most the size the computation. A major open problem that remains to be addressed is whether one can achieve fault tolerant verification of blind quantum computation without resorting to additional assumptions, as in \cite{gkw, kd, fh}. Specifically, the protocols from \cite{gkw, kd, fh} assumed (either implicitly or explicitly) that the noise on the verifier's device is independent of the secret parameters that are used to achieve blindness. Additionally, the noise, on that device, should be uncorrelated with the prover's private system.

Following the discussion in \cite{abem}, the authors stress that, so far, there is no protocol that simultaneously achieves all of the following properties:

\begin{itemize}
\item[\textbf{(1)}] The verifier has a preparation or measurement device whose size is at most polylogarithmic in the size of the delegated quantum computation.
\item[\textbf{(2)}] The noise rate for each quantum operation is below some constant threshold. Additionally, the noise on the verifier's device can depend on whatever operations the verifier performs and can be correlated with the prover's quantum system.
\item[\textbf{(3)}] The protocol is unconditionally blind. In other words, throughout the interaction with the verifier, the prover only learns the size of the delegated quantum computation.
\end{itemize}

As mentioned, previous approaches achieved conditions $1$ and $3$ but not $2$. The protocol we proposed achieves conditions $1$ (with a constant size device) and $2$ but not $3$.

Recently, a protocol has been proposed in which a classical client can delegate and verify the computations performed by a quantum server \cite{urmila}. This protocol, however, relies on certain computational assumptions about whether a quantum computer can solve a particular problem. Therefore, the verifier would not need to worry about introducing errors into the prover's quantum computation, as was the concern in our work, but this comes at the cost of making these computational assumptions. Interestingly, the protocol in \cite{urmila} also uses post hoc verification as a primitive, except now the prover measures the qubits in the history state and relays the outcomes to the verifier. The preparation of the history state is slightly more complex than in our case since it uses cryptographic one-way functions which introduce some overhead.

Returning to our results, the simulations are encouraging. Given that the obtained thresholds are higher than the error rates observed in current experimental implementations \cite{superconducting, takita2017experimental, google}, a demonstration of the protocol in the near future is likely.
The major obstacle to such a demonstration would be the production of these highly entangled history states. The use of CSS codes, however, means that one can encode these states in codes having even higher noise thresholds than the Steane code, such as surface codes \cite{surfacecodes}.

\textbf{Acknowledgments.}
We acknowledge useful correspondence with Thomas Vidick.
This work was supported by EPSRC grants EP/N003829/1 and EP/M013243/1. MJH also acknowledges funding from the EPSRC grant Building Large Quantum States out of Light (EP/K034480/1).

\Urlmuskip=0mu plus 1mu\relax
\bibliographystyle{unsrt}
\bibliography{refs}

\end{document}